\long\def\comment#1{}
\begin{document}

\mainmatter
\title{On Partial Vertex Cover on Bipartite Graphs and Trees\thanks{This research has been supported in part by the National Science Foundation through Award CNS-0849735 and CCF-0827397, and the Air Force of Scientific Research through Award FA9550-12-1-0199.}}
\author{
 Bugra Caskurlu and K. Subramani}
\institute{
 LDCSEE, \\ West Virginia University, \\
 Morgantown, WV \\
\email{caskurlu@gmail.com,ksmani@csee.wvu.edu}
}

\pagestyle{empty}

\maketitle

\begin{abstract}
It is well-known that the Vertex Cover problem is in {\bf P} on bipartite graphs, however; the computational complexity of the Partial Vertex Cover problem on bipartite graphs is open. In this paper, we first show that the Partial Vertex Cover problem is {\bf NP-hard} on bipartite graphs. We then identify an interesting special case of bipartite graphs, for which the Partial Vertex Cover problem can be solved in polynomial-time. We also show that the set of acyclic bipartite graphs, i.e., forests, and the set of bipartite graph where the degree of each vertex is at most $3$ fall into that special case. Therefore, we prove that the Partial Vertex Cover problem is in {\bf P} on trees, and it is also in {\bf P} on the set of bipartite graphs where the degree of each vertex is at most $3$.
\end{abstract}

\section{Introduction}
\label{sec:intro}

Covering problems arise often in practice. A mobile phone service provider should ensure that its base stations cover the signals transmitted from the phones of its customers. A chain market such as Walmart should ensure that it has a store close to its customers. The applications of the covering problems are not limited to corporations to sell a service to customers. The Air Force on a no-fly zone mission or border patrol officers trying to secure borders are to solve some form of a covering problem.

In many real life situations the corporations or the government is constrained in the resources it can allocate for the covering mission. The constraints may be hard constraints such as a government agency to operate within its approved budget, or profit dictated soft constraints such as a mobile phone service provider may decide not to cover a rural area since the revenues will not match the covering costs. Therefore, the goal in many real life situation can be cast as covering the domain as much as possible for a given fixed amount of resources to be allocated.

There is merit in studying partial covering problems both due to their wide applicability in a large range of applications and their theoretical importance as being natural generalizations of classical covering problems. In this paper, we study the Partial Vertex Cover (PVC) problem on bipartite graphs and trees, and use the notation PVCB and PVCT to denote them. Though it is well-known that the Vertex Cover problem is polynomial-time solvable on bipartite graphs, the computational complexity of both the PVCB and PVCT problems are open. However, there are provable good approximation algorithms for these problems, and the best approximation algorithm in the literature has an approximation ratio of $(\frac{4}{3}+ \epsilon)$ \cite{KPS11}.

\bigskip

The principal contributions of this paper are as follows:
 \begin{enumerate}[(i)]
  \item The PVCB problem is {\bf NP-hard}.
  \item The Partial Vertex Cover problem is polynomial-time solvable on the set of bipartite graphs that has the Marginally Nonincreasing Coverage (MNC) property, which is defined in Section \ref{sec:MNCProperty}.
  \item Trees have the MNC property, and therefore, the PVCT problem is in {\bf P}.
  \item The set of bipartite graphs where the degree of each vertex is at most $3$ has the MNC property, and therefore, the Partial Vertex Cover problem is polynomial-time solvable on them.
 \end{enumerate}

\bigskip

The rest of this paper is organized as follows: Section \ref{sec:sop} presents a formal definition of the PVC problem. We present a concrete application to motivate the PVCB problem in Section \ref{sec:motiv}. The related work is presented in Section \ref{sec:relwork}. The computational complexity of the PVCB problem is established in Section \ref{sec:comp}. In Section \ref{sec:MNCProperty}, we show some interesting special cases for which the PVCB problem is polynomial-time solvable. We conclude and point out several research directions in Section \ref{sec:conc}.

\section{Statement of Problems}
\label{sec:sop}
In the classical Vertex Cover (VC) problem, we are given an undirected graph ${\bf G= \langle V, E \rangle}$, where ${\bf V}$ is the vertex set with $|{\bf V}| = n$, ${\bf E}$ is the edge set with $|{\bf E}| = m$. The goal is to find a minimum cardinality subset ${\bf V'} \subset {\bf V}$, such that for every edge $e = (i,j) \in {\bf E}$, either $i \in {\bf V'}$ or $j \in {\bf V'}$.

\bigskip

In the Partial Vertex Cover (PVC) problem, we are given an integer $t$, and an undirected graph ${\bf G= \langle V, E \rangle}$, where ${\bf V}$ is the vertex set with $|{\bf V}| = n$, ${\bf E}$ is the edge set with $|{\bf E}| = m$. The goal is to find a minimum cardinality subset ${\bf V'} \subset {\bf V}$ such that ${\bf V'}$ {\em covers} at least $t$ edges, i.e., for at least $t$ edges $(i,j) \in {\bf E}$, either $i \in {\bf V'}$ or $j \in {\bf V'}$. It is trivial to observe that the PVC problem is a generalization of the VC problem, since the PVC problem subsumes the VC problem for $t = m$.

\section{Motivation}
\label{sec:motiv}

A concrete application of the Partial Vertex Cover problem on bipartite graphs on risk analysis is given in \cite{CGBS13}. In \cite{CGBS13}, the risk of a computational system is modeled as a flow between the first and last partitions in a tripartite graph, where the vertices of the three partitions represent threats to the system, vulnerabilities of the system, and the assets of the system as shown in Figure \ref{fig:motiv}.

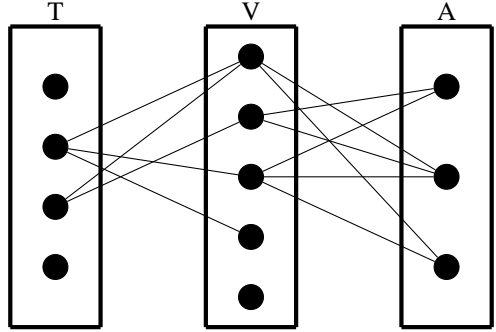
\begin{figure}[ht]
\begin{center}
 \begin{tikzpicture}[scale=0.4]

  \node at (-6.5, 5.5) (l13) {T};
  \node at (0, 5.5) (l14) {V};
  \node at (6.5, 5.5) (l15) {A};

  \draw[ultra thick] (-8, 5)--(-5, 5);
  \draw[ultra thick] (-5, 5)--(-5, -5);
  \draw[ultra thick] (-5, -5)--(-8, -5);
  \draw[ultra thick] (-8, -5)--(-8, 5);

  \draw[ultra thick] (-1.5, 5)--(1.5, 5);
  \draw[ultra thick] (1.5, 5)--(1.5, -5);
  \draw[ultra thick] (1.5, -5)--(-1.5, -5);
  \draw[ultra thick] (-1.5, -5)--(-1.5, 5);

  \draw[ultra thick] (5, 5)--(8, 5);
  \draw[ultra thick] (8, 5)--(8, -5);
  \draw[ultra thick] (8, -5)--(5, -5);
  \draw[ultra thick] (5, -5)--(5, 5);

 \node[circle,fill=black,draw] at (-6.5,3) (n1) {};
 \node[circle,fill=black,draw] at (-6.5,1) (n2) {};
 \node[circle,fill=black,draw] at (-6.5,-1) (n3) {};
 \node[circle,fill=black,draw] at (-6.5,-3) (n4) {};
 \node[circle,fill=black,draw] at (0,4) (n5) {};
 \node[circle,fill=black,draw] at (0,2) (n6) {};
 \node[circle,fill=black,draw] at (0,0) (n7) {};
 \node[circle,fill=black,draw] at (0,-2) (n8) {};
 \node[circle,fill=black,draw] at (0,-4) (n9) {};
 \node[circle,fill=black,draw] at (6.5,3) (n10) {};
 \node[circle,fill=black,draw] at (6.5,0) (n11) {};
 \node[circle,fill=black,draw] at (6.5,-3) (n12) {};

  \draw (n2)--(n5);
  \draw (n2)--(n7);
  \draw (n2)--(n8);
  \draw (n3)--(n5);
  \draw (n3)--(n6);
  \draw (n5)--(n11);
  \draw (n5)--(n12);
  \draw (n6)--(n10);
  \draw (n6)--(n11);
  \draw (n7)--(n10);
  \draw (n7)--(n11);
  \draw (n7)--(n12);
 \end{tikzpicture}

\end{center}
\label{fig:motiv}
\caption{Risk in a computational system can be modeled in terms of its
  constituent components. The threats, weaknesses (corresponding to
  specific vulnerabilities), and assets form three disjoint sets, named as $T, V$, and $A$ respectively. An edge between vertices represents a
  contribution to the system risk. The system's risk is the total flow
  between the first and third sets.}
\end{figure}

In the risk management model given in \cite{CGBS13}, the goal is to reduce the risk in the system (flow between the first and last partitions) below a predefined risk threshold level by either restricting the permissions of the users, or encapsulating the system assets. These two strategies correspond to deleting minimum number of vertices from the second and the third partitions of the tripartite graph so that the flow between the first and the third partitions are reduced below the predefined threshold level. The equivalence of this risk management system and the Partial Vertex Cover problem on bipartite graphs is established in \cite{CGBS13}.

\section{Related Work}
\label{sec:relwork}

 The VC problem is one of the classical {\bf NP-complete} problems listed by Karp \cite{Kar72}. There are several polynomial-time approximation algorithms for the VC problem within a factor of $2$, and the best-known approximation algorithm for the VC problem has an approximation factor of $2 - \theta \left(\frac{1}{\sqrt{\log n}}\right)$ \cite{Kar09}. The VC problem is known to be {\bf APX-complete} \cite{PY91}. Moreover, it cannot be approximated to within a factor of $1.3606$ unless {\bf P = NP} \cite{DS05}, and not within any constant factor smaller than $2$, unless the \textit{unique games conjecture} is false \cite{KR08}.

\bigskip

Since the PVC problem subsumes the VC problem for $t = m$, all the hardness results given above for the VC problem directly apply to the PVC problem. The PVC problem and the partial variants of similar graph problems have been extensively studied for more than a decade \cite{Bla03}, \cite{KMR07}, \cite{KLR08}, \cite{KMRR06}, \cite{BFMR10}. In particular, there is a $O(n \cdot \log n + m)$-time primal-dual $2$-approximation algorithm\cite{M09}, a combinatorial $2$-approximation algorithm \cite{BFMR10}, and several $(2 - o(1))$-approximation algorithms \cite{Bar01}, \cite{BshB98}, \cite{GKS04}, \cite{Hoch98}.

\bigskip

Though the VC problem and the PVC problem has almost matching approximation ratios and inapproximability results, the PVC problem is in some sense more difficult than the VC problem. For instance, the PVC problem is W[1]-complete while the VC problem is fixed parameter tractable \cite{Guo05}.

\section{Computational Complexity of the PVCB Problem}
\label{sec:comp}

This section is devoted to prove Theorem \ref{thm:complexity}, i.e., the PVCB problem is {\bf NP-hard}.

\begin{theorem}
\label{thm:complexity}
The Partial Vertex Cover problem is {\bf NP-hard} on bipartite graphs.
\end{theorem}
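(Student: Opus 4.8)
The plan is to reduce from a known NP-hard problem via a polynomial-time many-one reduction producing a bipartite instance of PVC.

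Let me think about what reduction would work here.The plan is to give a polynomial-time many-one reduction from a known NP-complete problem to the decision version of the PVCB problem (``is there a subset $V' \subseteq V$ with $|V'| \le b$ covering at least $t$ edges?''). Since the Vertex Cover problem is polynomial-time solvable on bipartite graphs, any such reduction must exploit the \emph{partial} nature of the problem: the hardness cannot survive at $t = m$ (where PVC collapses to VC), so it must live in the regime $t < m$. The guiding reformulation I would use is the complementary view of coverage. For a bipartite graph $G = \langle X \cup Y, E\rangle$ and a candidate cover $V'$, the set of \emph{uncovered} edges is exactly the set of edges induced by the complement $U = V \setminus V'$; writing $U_X = U \cap X$ and $U_Y = U \cap Y$, the number of uncovered edges equals the number of edges of $G$ running between $U_X$ and $U_Y$. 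Hence $V'$ of size $b$ covers at least $t$ edges if and only if the leftover set $U$, of size $n - b$, spans at most $m - t$ crossing edges. Thus the PVCB problem is equivalent to asking for a large induced subgraph that spans few crossing edges.

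With this reformulation in hand, I would reduce from the Balanced Complete Bipartite Subgraph problem (equivalently, a suitable clique-type problem), which asks whether a bipartite graph $B = \langle X \cup Y, F\rangle$ with $|X| = |Y| = N$ contains a complete bipartite subgraph $K_{\ell,\ell}$. The key observation is that passing to the \emph{bipartite complement} $\bar{B}$ (the same vertex classes, with an edge between $x \in X$ and $y \in Y$ precisely when $xy \notin F$) turns a $K_{\ell,\ell}$ of $B$ into a pair $U_X, U_Y$ with $|U_X| = |U_Y| = \ell$ that spans \emph{no} edge of $\bar{B}$. By the complementary view above, such a pair is exactly a leftover set witnessing that a cover of size $2N - 2\ell$ handles all the remaining edges. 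I would therefore build the PVCB instance on $\bar{B}$, augmented with gadget vertices and edges, and tune the budget $b$ and the threshold $t$ so that any cover meeting the threshold within budget forces the leftover vertices to split as $\ell$ on each side and to span no genuine edge of $\bar{B}$, i.e. to certify the sought biclique.

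The main obstacle is that this correspondence, taken naively, collapses: a bipartite graph always has a large independent set (an entire side), so merely demanding that the leftover $n - b$ vertices span few edges can be met by a wildly \emph{unbalanced} set and yields nothing beyond ordinary (polynomial-time) Vertex Cover. The real work is to force \emph{balance}, i.e. to guarantee $|U_X| = |U_Y| = \ell$ rather than an arbitrary split summing to $2\ell$. I expect to achieve this with penalty gadgets that charge a large, side-dependent number of uncovered edges whenever the leftover set is lopsided; concretely, attaching blocks of auxiliary vertices to each side (realized through many degree-regulating copies, since the instance must stay a simple bipartite graph) so that any imbalance produces more than $m - t$ uncovered gadget edges and hence violates the threshold. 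Once balance is enforced, I would verify both directions of the correspondence and confirm that the construction has size polynomial in $N$ and $\ell$, thereby establishing that the PVCB problem is NP-hard.
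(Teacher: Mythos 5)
Your reformulation is correct and cleanly stated: a set $V'$ of size $b$ covers at least $t$ edges if and only if the leftover set $U = V \setminus V'$ of size $n-b$ induces at most $m-t$ edges, and passing to the bipartite complement does turn a $K_{\ell,\ell}$ into a balanced edge-free leftover pair (and the Balanced Complete Bipartite Subgraph problem is indeed NP-complete, GT24 in Garey--Johnson). But there is a genuine gap, and it sits exactly where you place ``the real work'': the balance-enforcing gadget is asserted, not constructed, and it is not at all routine. The difficulty is that the uncovered-edge penalty is inherently \emph{pairwise}: an edge is charged only when \emph{both} endpoints lie in $U$. Any block of auxiliary vertices you attach to a side is itself eligible to be placed in $U$ or in the cover, so a lopsided leftover set can evade your intended ``side-dependent'' charge by absorbing gadget vertices whose incident edges are covered from the other endpoint, at zero cost in uncovered edges. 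To make the gadget work you must simultaneously control (i) the split of $U$ between the two sides, (ii) the split of $U$ between genuine and auxiliary vertices, and (iii) the choice of $b$ and $t$ so that taking an entire side of the bipartition as $U$ (which induces no edges whatsoever) is infeasible. None of these counting arguments is supplied, and neither direction of the correspondence is verified, so the reduction as written does not establish NP-hardness; it is a plausible program rather than a proof.

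For contrast, the paper avoids the balance problem entirely by reducing directly from CLIQUE rather than from a biclique problem: each edge $e'=(v_i',v_j')$ of the source graph becomes an \emph{edge block} (a pendant pair $e_1,e_2$ with the edge between them) together with two incidence edges $(v_i,e_1)$, $(v_j,e_1)$, yielding a bipartite graph with $m = 3m'$ edges; the instance asks for $k + m' - \frac{k(k-1)}{2}$ vertices covering at least $m - \frac{k(k-1)}{2}$ edges. There, a normalization step (never pick the right vertex of a block) plus an explicit count over edge blocks versus incidence edges pins down the structure of any successful cover and forces the $k$ ``vertex'' vertices to induce a clique in the source graph. If you want to salvage your route, you would need to invent and analyze the penalty construction with the same level of explicitness --- which is plausibly doable but is precisely the content your write-up defers.
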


\begin{proof}
We will prove Theorem \ref{thm:complexity} by giving a Karp reduction from the CLIQUE problem. In the CLIQUE problem, we are given an undirected graph ${\bf G'= \langle V', E'\rangle}$, and an integer $k$, and the goal is to find whether there exists a complete subgraph of ${\bf G'}$ with $k$ vertices. Assume we are given an arbitrary undirected graph ${\bf G'= \langle V', E'\rangle}$, where $n'$ and $m'$ denote $|{\bf V'}|$ and $|{\bf E'}|$ respectively, and an integer $k$. We construct a corresponding bipartite graph ${\bf G= \langle V_1 \cup V_2, E \rangle}$ as explained below.

\bigskip

For every vertex $v'_i \in {\bf V'}$, ${\bf G}$ has a corresponding vertex $v_i \in {\bf V_2}$. For every edge $e' \in {\bf E'}$, there is a corresponding \textit{edge block} in ${\bf G}$. The term \textit{edge block} refers to two vertices and an edge in between. So, for each edge $e' \in {\bf E'}$, ${\bf G}$ has two corresponding vertices $e_1 \in {\bf V_2}$ and $e_2 \in {\bf V_1}$ and the edge $(e_1, e_2)$. In order to capture the incidence matrix of ${\bf G'}$, for each edge $e' = (v'_i, v'_j)$ of ${\bf G'}$, ${\bf G}$ has $2$ additional edges $(v_i, e_1)$ and $(v_j, e_1)$. Let $n$ and $m$ denote the number of vertices and edges of ${\bf G}$, respectively. Notice that the bipartite graph ${\bf G}$ has $n' + 2 \cdot m'$ vertices and $3 \cdot m$ edges. More precisely, we have $n = n' + 2 \cdot m'$, and $m = 3 \cdot m'$. We use the term \textit{left vertex} of an edge block for the vertex of the edge block that belongs to ${\bf V_1}$. The other vertex of the edge block that belongs to ${\bf V_2}$, is referred to as the \textit{right vertex} of the edge block throughout the paper.

\bigskip

In Figure \ref{fig:PVCB}, we are given a simple undirected graph ${\bf G'}$ on the left that consists of 2 vertices $v'_1$ and $v'_2$ and an edge $e' = (v'_1, v'_2)$ in between. The figure has the corresponding bipartite graph ${\bf G}$ on the right. The vertices $v_1$ and $v_2$ of ${\bf G}$ correspond to 2 vertices of ${\bf G'}$. The 2 vertices $e_1$ and $e_2$ of ${\bf G}$ and the edge $e$ in between is the corresponding edge block of the edge $e'$ of ${\bf G'}$. The 2 edges $(v_1, e_1)$ and $(v_2, e_1)$ of ${\bf G}$ capture the incidence matrix of ${\bf G'}$.

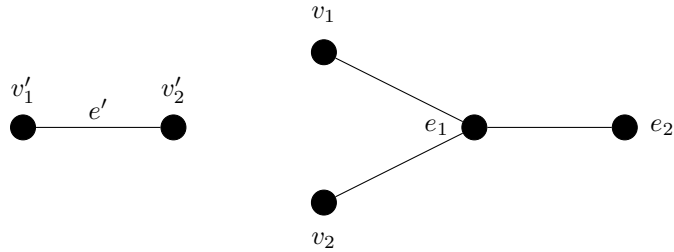
\begin{figure}[htbp]
\begin{center}
\begin{tikzpicture}
\node at (0, .5) (h) {An undirected graph ${\bf G'}$ and corresponding bipartite graph ${\bf G}$};

\node [fill=black, circle, draw] at (1, -1) (v1) {};
\node at (1, -.5) (v13) {$v_1$};
\node [fill=black, circle, draw] at (3, -2) (v4) {};
\node at (2.5, -2) (e1) {$e_1$};
\node [fill=black, circle, draw] at (5, -2) (v5) {};
\node at (5.5, -2) (e2) {$e_2$};
\node [fill=black, circle, draw] at (1, -3) (v6) {};
\node at (1, -3.5) (v23) {$v_2$};

\draw (v4)--(v5);
\draw (v1)--(v4);
\draw (v6)--(v4);

\node [fill=black, circle, draw] at (-3, -2) (v9) {};
\node at (-3, -1.5) (v_1) {$v'_1$};
\node [fill=black, circle, draw] at (-1, -2) (v10) {};
\node at (-1, -1.5) (v_2) {$v'_2$};

\node at (-2, -1.75) (e) {$e'$};

\draw (v9)--(v10);
\end{tikzpicture}
\end{center}

\caption{Construction of the corresponding bipartite graph ${\bf G}$ for a given undirected graph ${\bf G'}$.}
\label{fig:PVCB}
\end{figure}

We will prove Theorem \ref{thm:complexity} by showing that solving the CLIQUE problem on ${\bf G'}$ reduces to solving the PVCB problem on ${\bf G}$ with $t = m - \frac{k \cdot (k-1)}{2}$. In the rest of the paper, without loss of generality, we will assume that $m' > \frac{k \cdot (k - 1)}{2}$, and $k \geq 5$. Notice that these assumptions do not violate the soundness of the proof since the CLIQUE problem is still {\bf NP-hard} after these assumptions on the input. We precisely will show the following: There is a complete subgraph of $k$ vertices on ${\bf G'}$ if and only if there is a subset of $k + m' - \frac{k \cdot (k-1)}{2}$ vertices ${\bf V''}$ of ${\bf G}$ such that the number of edges that are covered by ${\bf V''}$ is at least $m - \frac{k \cdot (k-1)}{2}$.

\bigskip

Without loss of generality we can assume that ${\bf V''}$ does not contain the right vertex of any of the edge blocks of ${\bf G}$, since the right vertex of an edge block is incident to only one edge and that edge can be covered by the left vertex of the edge block as well. Therefore, without loss of generality, we can assume that all the vertices of ${\bf V''}$ are the vertices that correspond to the vertices of ${\bf G'}$, or the left vertices of the edge blocks. Since the number of edges that are to be covered by ${\bf V''}$ is at least $m - \frac{k \cdot (k-1)}{2}$, the number of edges that are not covered by ${\bf V''}$ is at most $\frac{k \cdot (k-1)}{2}$. Therefore, ${\bf V''}$ has to contain the left vertices of at least $m' - \frac{k \cdot (k-1)}{2}$ edge blocks. We will complete the proof of Theorem \ref{thm:complexity} by proving Lemma \ref{lem:yes}, which maps the yes instances of the CLIQUE problem to the yes instances of the PVCB problem, and Lemma \ref{lem:no}, which maps the no instances of the CLIQUE problem to the no instances of the PVCB problem.

\begin{lemma}
\label{lem:yes}
If there exists a complete subgraph of $k$ vertices on ${\bf G'}$, then there exists a subset ${\bf V''}$ of vertices of ${\bf G}$ such that $|{\bf V''}| = k + m' - \frac{k \cdot (k-1)}{2}$, and ${\bf V''}$ covers at least $m - \frac{k \cdot (k-1)}{2}$ edges of ${\bf G}$.
\end{lemma}

\begin{proof}
Assume that ${\bf G'}$ has a complete subgraph of $k$ vertices and let ${\bf V''}$ be composed of the following $k + m' - \frac{k \cdot (k-1)}{2}$ vertices of ${\bf G}$. For every vertex of the complete subgraph of ${\bf G'}$, let the corresponding vertex of ${\bf G}$ be in ${\bf V''}$. Notice that there are exactly $k$ such vertices. The complete subgraph of this $k$ vertices has $\frac{k \cdot (k-1)}{2}$ edges in ${\bf G'}$. Therefore, there are $m' - \frac{k \cdot (k-1)}{2}$ edges of ${\bf G'}$ that are \textit{not} in the complete subgraph of $k$ vertices in ${\bf G'}$. For each of these $m' - \frac{k \cdot (k-1)}{2}$ edges of ${\bf G'}$, let the left vertex of the corresponding edge block in ${\bf G}$ be contained in ${\bf V''}$. Notice that there are exactly $m' - \frac{k \cdot (k-1)}{2}$ such vertices in ${\bf V''}$. So, $|{\bf V''}| = k + m' - \frac{k \cdot (k-1)}{2}$ as desired.

\bigskip

All we need to prove is that ${\bf V''}$ covers at least $m - \frac{k \cdot (k-1)}{2}$ edges of ${\bf G}$. Let us first consider the edges of ${\bf G}$ that capture the incidence relation of the edges of ${\bf G'}$. Recall that for every edge $e' = (v'_i, v'_j)$ of ${\bf G'}$, there are $2$ edges in ${\bf G}$ to capture the incidence relation of $e'$, namely $(v_i, e_1)$ and $(v_j, e_2)$. So, in total there are $2 \cdot m'$ such edges in ${\bf G}$. The $k \cdot (k-1)$ edges of ${\bf G}$ that capture the incidence relation of the $\frac{k \cdot (k-1)}{2}$ edges of the complete subgraph of ${\bf G'}$ are covered by the $k$ vertices of ${\bf V''}$ that correspond to the $k$ vertices of ${\bf G'}$. The remaining $2 \cdot m' - k \cdot (k-1)$ edges of ${\bf G}$, that capture the incidence relation of the $m' - \frac{k \cdot (k-1)}{2}$ edges of ${\bf G'}$ that are not part of the complete subgraph, are covered by the left vertices of the $m' - \frac{k \cdot (k-1)}{2}$ edge blocks in ${\bf V''}$. Therefore, all $2 \cdot m'$ edges of ${\bf G}$ that capture the incidence relation of the $m'$ edges of ${\bf G'}$ are covered by ${\bf V''}$.

\bigskip

Recall that there are $m'$ additional edges in ${\bf G}$. These edges are the edges of the $m'$ edge blocks. The left vertices of the $m' - \frac{k \cdot (k-1)}{2}$ edge blocks that are contained in ${\bf V''}$ cover $m' - \frac{k \cdot (k-1)}{2}$ of those edges. Therefore, there are only $\frac{k \cdot (k-1)}{2}$ edges of ${\bf G}$ that are not covered by ${\bf V''}$. So, ${\bf V''}$ covers $m - \frac{k \cdot (k-1)}{2}$ edges as stated by Lemma \ref{lem:yes}.
\end{proof}

\begin{lemma}
\label{lem:no}
If ${\bf G'}$ does not have a complete subgraph of $k$ vertices, then no subset ${\bf V''}$ of vertices of ${\bf G}$ such that $|{\bf V''}| = k + m' - \frac{k \cdot (k-1)}{2}$ covers at least $m - \frac{k \cdot (k-1)}{2}$ edges of ${\bf G}$.
\end{lemma}

\begin{proof}
Assume ${\bf G'}$ does not have a complete subgraph of $k$ vertices. For the purpose of contradiction, assume that there is a subset ${\bf V''}$ of vertices of ${\bf G}$ such that $|{\bf V''}| = k + m' - \frac{k \cdot (k-1)}{2}$, and ${\bf V''}$ covers at least $m - \frac{k \cdot (k-1)}{2}$ edges of ${\bf G}$.

\bigskip

Since ${\bf V''}$ covers at least $m - \frac{k \cdot (k-1)}{2}$ edges of ${\bf G}$ and $m' > \frac{k \cdot (k-1)}{2}$, ${\bf V''}$ covers at least $m' - \frac{k \cdot (k-1)}{2}$ edges of the edge blocks. Therefore, ${\bf V''}$ contains the left vertices of at least $m' - \frac{k \cdot (k-1)}{2}$ edge blocks. Since $|{\bf V''}| = k + m' - \frac{k \cdot (k-1)}{2}$, ${\bf V''}$ contains at most $k$ vertices of ${\bf G}$ that correspond to the vertices of ${\bf G'}$.

\bigskip

First consider the case where ${\bf V''}$ contains exactly $k$ vertices of ${\bf G}$ that correspond to the vertices of ${\bf G'}$, and exactly $m' - \frac{k \cdot (k-1)}{2}$ left vertices of edge blocks. Since there are only $m' - \frac{k \cdot (k-1)}{2}$ left vertices of edge blocks, ${\bf V''}$ does not cover $\frac{k \cdot (k-1)}{2}$ edges of the edge blocks. Since ${\bf V''}$ covers at least $m - \frac{k \cdot (k-1)}{2}$ in total, ${\bf V''}$ covers all the edges of ${\bf G}$ that capture the incidence relation of all the edges of ${\bf G'}$. Since ${\bf G'}$ does not have a complete subgraph of $k$ vertices, the $k$ vertices of ${\bf G}$ that correspond to the vertices of ${\bf G'}$ cover both of the edges that capture the incidence relation of $\frac{k \cdot (k-1)}{2} - \alpha$ edges of ${\bf G'}$ for some $1 \leq \alpha < \frac{k \cdot (k-1)}{2}$. Since ${\bf V''}$ covers all $2 \cdot m'$ edges of ${\bf G}$ that capture the incidence relation of the edges of ${\bf G'}$, ${\bf V''}$ contains the left vertices of all the edge blocks of ${\bf G}$ that correspond to the $m' - \frac{k \cdot (k-1)}{2} + \alpha$ edges of ${\bf G'}$. This is a contradiction since we assumed that ${\bf V''}$ contains exactly $m - \frac{k \cdot (k-1)}{2}$ left vertices of edge blocks.

\bigskip

Therefore, ${\bf V''}$ contains exactly $k - l$ vertices of ${\bf G}$ that correspond to some $k - l$ vertices of ${\bf G'}$, and exactly $m' - \frac{k \cdot (k-1)}{2} + l$ left vertices of edge blocks for some $0 < l < k$. Recall that the incidence relation of each edge of ${\bf G'}$ is captured by $2$ edges in ${\bf G}$. Notice that the subgraph formed by this $k - l$ vertices of ${\bf G'}$ contains at most $\frac{(k - l) \cdot (k - l - 1)}{2}$ edges of ${\bf G'}$.  Therefore, the corresponding $k - l$ vertices of ${\bf V''}$ covers both of the incidence edges of exactly $\frac{(k - l) \cdot (k - l - 1)}{2}$ edges of ${\bf G'}$. In other words, at least one incidence edge of $m' - \frac{(k - l) \cdot (k - l - 1)}{2}$ edges of ${\bf G'}$ is not covered the $k - l$ vertices of ${\bf G}$ that correspond to some $k - l$ vertices of ${\bf V''}$. Since we already have $\frac{k \cdot (k-1)}{2} - l$ edges of edge blocks left uncovered, the left vertices of the edge blocks in ${\bf V''}$ has to cover an incidence edge for at least $m' - \frac{(k - l) \cdot (k - l - 1)}{2} - l$ edges of ${\bf G'}$. This is not possible since the left vertex of each edge block in ${\bf G}$ covers the corresponding incidence edges of exactly one edge of ${\bf G'}$, and $m' - \frac{(k - l) \cdot (k - l - 1)}{2} - l > m' - \frac{k \cdot (k-1)}{2} + l$ for $k \geq 5$.
\end{proof}

\end{proof}

\section{Marginally Nonincreasing Coverage Property}
\label{sec:MNCProperty}

Let ${\bf B}$ denote an arbitrary bipartite graph. We define $OPT_B (k)$ as the maximum number of edges of ${\bf B}$ that can be covered by a subset of $k$ vertices. We say that the Partial Vertex Cover problem has MNC property on a bipartite graph ${\bf B}$ if $OPT_B (k+2) - OPT_B (k+1) \leq OPT_B (k+1) - OPT_B (k)$ for all $k \geq 0$.

\begin{theorem}
\label{thm:PifMNPholds}
Let ${\bf B}$ be a bipartite graph such that the MNC property holds on ${\bf B}$. Then the PVCB problem can be solved in polynomial-time on ${\bf B}$.
\end{theorem}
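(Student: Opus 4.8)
The plan is to observe that solving the PVCB problem on $\mathbf{B}$ is equivalent to finding the smallest $k$ with $OPT_B(k) \geq t$, so it suffices to evaluate $OPT_B(k)$ (and exhibit a witnessing vertex set) in polynomial time. The MNC property says precisely that the sequence $OPT_B(0), OPT_B(1), \ldots$ is \emph{concave}: its marginal gains $OPT_B(k+1)-OPT_B(k)$ are nonincreasing. I would exploit this concavity through a Lagrangian relaxation whose subproblems are solvable by minimum cut.

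First I would set up the Lagrangian subproblem. For a parameter $\lambda \geq 0$ let $\mathrm{cov}(S)$ denote the number of edges of $\mathbf{B}$ covered by a vertex set $S$, and consider $g(\lambda) = \max_{S} \left(\mathrm{cov}(S) - \lambda |S|\right) = \max_{k}\left(OPT_B(k) - \lambda k\right)$. Maximizing $\mathrm{cov}(S)-\lambda|S|$ is the same as minimizing $\lambda|S| + (m - \mathrm{cov}(S))$, i.e.\ paying $\lambda$ per chosen vertex and $1$ per uncovered edge. On the bipartite graph $\mathbf{B} = \langle V_1 \cup V_2, E\rangle$ this is a minimum $s$-$t$ cut: put an arc $s \to u$ of capacity $\lambda$ for each $u \in V_1$, an arc $w \to t$ of capacity $\lambda$ for each $w \in V_2$, and an arc $u \to w$ of capacity $1$ for each edge $(u,w) \in E$. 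Cutting a source/sink arc corresponds to selecting that vertex, and cutting a middle arc corresponds to leaving that edge uncovered, so the value of a minimum cut equals $m - g(\lambda)$ and the cut itself yields an optimal set $S_\lambda$. Hence the subproblem is solved in polynomial time for every $\lambda$.

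Next I would use the MNC property to argue that these min-cut computations recover $OPT_B(k)$ \emph{for every} $k$. Lagrangian relaxation only ever certifies points lying on the upper concave envelope of $k \mapsto OPT_B(k)$: for a given $\lambda$, the optimal cardinalities of $g(\lambda)$ are exactly those $k$ at which a line of slope $\lambda$ supports the envelope. When $OPT_B$ is concave it \emph{equals} its own concave envelope, so there is no integrality gap at any cardinality, and each $(k, OPT_B(k))$ is attained by $S_\lambda$ for an appropriate $\lambda$. Since the marginal gains are nonincreasing integers bounded by the maximum degree ($\leq n$), only polynomially many distinct slopes occur; sweeping $\lambda$ over the candidate slopes (or binary-searching on $\lambda$, since the optimal cardinality is monotone in $\lambda$) reconstructs the whole concave function $OPT_B(\cdot)$ together with witnessing cuts. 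Given $t$, I then return the least $k$ with $OPT_B(k) \geq t$.

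The step I expect to be the crux—and the only place MNC is needed—is the claim that Lagrangian relaxation has no gap at any cardinality. For a general bipartite graph $OPT_B$ need not be concave, and then some cardinalities $k$ sit strictly below the concave envelope; such $k$ are never Lagrangian-optimal for any $\lambda$, so minimum cut cannot certify $OPT_B(k)$, which is exactly how the NP-hardness of Theorem \ref{thm:complexity} survives. Concavity removes precisely these gaps. A secondary technical point is handling ties: when several consecutive cardinalities share a common marginal gain (equality in the MNC inequality), the Lagrangian optimum is nonunique in size and the target $k$ may lie in the interior of a linear segment. There I would extract the smallest optimal cut for that slope and then repeatedly add vertices whose marginal coverage equals $\lambda$—which exist all along the segment because $OPT_B$ is linear with slope $\lambda$ there—until a covering set of size exactly $k$ is obtained.
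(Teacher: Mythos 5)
Your main line coincides with the paper's own proof in all essentials: the paper likewise takes the Lagrangian relaxation of the PVC integer program (your $s$-$t$ min-cut network is precisely the reason its constraint matrix is totally unimodular, so the two subproblem solvers are the same), proves the same ``no gap'' lemma---that a Lagrangian optimum selecting $k_1$ vertices covers $OPT_B(k_1)$ edges---uses MNC to show that for a non-tied threshold the number of selected vertices equals the number of marginal gains $a_k = OPT_B(k) - OPT_B(k-1)$ exceeding $1/\lambda$, and then searches over $\lambda$ with half-integral $1/\lambda$, which is exactly your ``polynomially many candidate slopes'' observation. Up to that point your proposal is correct and essentially identical to the paper.

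The genuine gap is in your tie-handling. The claim that ``vertices whose marginal coverage equals $\lambda$ exist all along the segment because $OPT_B$ is linear with slope $\lambda$ there'' is false: linearity of the \emph{value} function does not make optimal \emph{solutions} greedily extendable from a given optimal cut. Concretely, let $u$ be adjacent to $x,y,p,q$, let $x$ additionally be adjacent to $a_1,a_2,a_3$ and $y$ to $b_1,b_2,b_3$ (bipartite, with $u$ and the leaves on one side), and take two disjoint copies, with primes for the second copy. The marginal gains are $4,4,4,4,2,2,0,\ldots$, so MNC holds and $OPT_B(k)=4k$ for $k\le 4$. At the tied slope the empty set is a smallest optimal cut, and a greedy that adds $u$ and then $u'$ (each has marginal coverage exactly $4$ at the moment it is added) is stranded at size $2$: every remaining vertex now has marginal coverage at most $3$, so for $t=12$ it produces a size-$3$ set covering at most $11$ edges, although $OPT_B(3)=12$ is witnessed only by sets such as $\{x,y,x'\}$ that avoid $u$ and $u'$. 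Moreover $k=3$ is interior to the linear segment, so no non-tied $\lambda$ ever certifies it---this is exactly the case your fix was meant to cover. Note that the paper quietly sidesteps the same difficulty: in the bracketed case it outputs only the optimal \emph{cardinality}, computed arithmetically as $k_1$ plus $\lceil (t-t_1)/a \rceil$, where $a$ is the marginal gain on the segment, constant by MNC (the paper writes this constant as $\frac{1}{\lambda_2}-\frac12$, an evident sign slip for $\frac{1}{\lambda_2}+\frac12$). If returning the optimal cardinality counts as solving the problem, replacing your greedy step by that arithmetic repairs your argument; actually extracting a witness set of an interior cardinality requires a different idea than the one you gave, and neither your proposal nor the paper supplies it.
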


The rest of the section is devoted to proving Theorem \ref{thm:PifMNPholds}.

Our proof is constructive, i.e., we show that there exists a polynomial-time algorithm that solves the PVCB problem exactly if the MNC property holds. The algorithm we give is indeed the $(\frac{4}{3} + \epsilon)$-approximation algorithm given by \cite{KPS11}. We analyze their algorithm under the assumption that the MNC property holds and prove that the algorithm returns the exact solution.

\bigskip

We first start with the following IP formulation of the PVCB problem:

\[\begin{array}{lll}
\mbox{minimize} & \sum_{i \in {\bf V}} x_i & \mbox{Objective Function}\\
\mbox{subject to} & x_i + x_j + z_e \geq 1 \qquad \forall e = (i,j) \in {\bf E} & \mbox{Constraints}\\
& \sum_{e \in E} z_e \leq (m - t)&\\
& x_i, z_e \in \{0,1\} \qquad \forall i \in {\bf V}, e \in {\bf E} & \mbox{Variables}\\
\end{array} \]

In this IP formulation, we have a variable $x_i$ for every vertex $i \in {\bf V}$. If the vertex $i$ is in the partial cover, then $x_i = 1$. Otherwise, $x_i = 0$. For every edge $e \in {\bf E}$, we have a variable $z_e$ to denote whether $z_e$ is uncovered or not. $z_e = 1$ if $e$ is \textit{uncovered}. $z_e = 0$ if $e$ is covered.

\bigskip

If we take the Lagrangian relaxation of the IP and remove the constant term from the objective, we will obtain the following Lagrangian IP:

\[\begin{array}{lll}
\mbox{minimize} & \sum_{i \in {\bf V}} w_i \cdot x_i + \lambda \sum_{e \in {\bf E}} z_e & \mbox{Objective Function}\\
\mbox{subject to} & x_i + x_j + z_e \geq 1 \qquad \forall e = (i,j) \in {\bf E} & \mbox{Constraints}\\
& x_i, z_e \in \{0,1\} \qquad \forall i \in {\bf V}, e \in {\bf E} & \mbox{Variables}\\
\end{array} \]

Since the constraint matrix of the Lagrangian IP is totally unimodular, we can solve it in polynomial-time for every fixed $\lambda$.

\begin{lemma}
Let $\lambda_1$ be an arbitrary number in $(0,1)$. Assume $\sum_{i \in {\bf V}} x_i = k_1$ in the optimal solution to the Lagrangian IP. The set of $k_1$ vertices of ${\bf B}$ that correspond to $k_1$ nonzero variables of the Lagrangian IP covers the maximum number of edges of ${\bf B}$ among all subsets of $k_1$ vertices of ${\bf B}$.
\end{lemma}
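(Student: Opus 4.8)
The plan is to eliminate the edge variables $z_e$ from the Lagrangian IP and rewrite its objective as a function of the chosen vertex set alone. First I would observe that for any fixed assignment of the vertex variables, i.e.\ any vertex set $S = \{i : x_i = 1\}$, each $z_e$ occurs in exactly one constraint $x_i + x_j + z_e \ge 1$, and since its objective coefficient $\lambda$ is positive, the cost-minimizing completion sets $z_e = 1$ precisely on the edges $e=(i,j)$ with $i,j \notin S$ (the uncovered edges) and $z_e = 0$ on all covered edges. Writing $c(S)$ for the number of edges of ${\bf B}$ covered by $S$, and using the unit weights $w_i = 1$ inherited from the original objective $\sum_i x_i$, the best Lagrangian value attainable with vertex set $S$ is therefore $|S| + \lambda\,(m - c(S))$. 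Solving the Lagrangian IP thus amounts to choosing $S$ to minimize $|S| + \lambda\,(m - c(S))$, equivalently to maximize $\lambda\, c(S) - |S|$.

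Next I would invoke the hypothesis. Let $S^{*}$ be the optimal vertex set, so that $|S^{*}| = k_1$, and let $S'$ be any set of $k_1$ vertices attaining $c(S') = OPT_B(k_1)$. Since $S'$, together with its forced $z$-values, is feasible for the Lagrangian IP while $S^{*}$ is optimal, I would compare the two objective values:
\[ k_1 + \lambda\,(m - c(S^{*})) \;=\; |S^{*}| + \lambda\,(m - c(S^{*})) \;\le\; |S'| + \lambda\,(m - c(S')) \;=\; k_1 + \lambda\,(m - c(S')). \]
Cancelling the common term $k_1$ and dividing by $\lambda > 0$ gives $m - c(S^{*}) \le m - c(S')$, i.e.\ $c(S^{*}) \ge c(S') = OPT_B(k_1)$. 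On the other hand $c(S^{*}) \le OPT_B(k_1)$ directly from the definition of $OPT_B$, since $S^{*}$ is itself a set of $k_1$ vertices. Hence $c(S^{*}) = OPT_B(k_1)$, which is exactly the assertion of the lemma.

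The argument is short, and its only delicate point is the reduction in the first step: I must check that the minimizing $z$ is genuinely pinned down by $S$ (it is, because each $z_e$ sits in a single constraint with a positive objective coefficient), and that the two candidate solutions share the same vertex count $k_1$, so that the $|S|$ terms cancel and the inequality collapses to a pure comparison of coverage. I would also flag that this proof uses only $\lambda > 0$, not the full restriction $\lambda_1 \in (0,1)$; the upper bound $\lambda_1 < 1$ is presumably needed elsewhere (to control which values $k_1$ can be realized as the optimal cardinality), but it plays no role in establishing the maximality of the coverage here.
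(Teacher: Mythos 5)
Your proof is correct and follows essentially the same route as the paper's: both arguments compare the Lagrangian objective of the optimal set with that of a same-cardinality set of maximum coverage, note that the $\sum_i x_i$ terms cancel since both sets have $k_1$ vertices, and conclude that the penalty term $\lambda\sum_e z_e$ forces the optimal set to achieve $OPT_B(k_1)$ (the paper phrases this as a contradiction, you argue directly). Your explicit justification that the $z_e$ values are pinned down by the vertex set -- a step the paper leaves implicit -- and your remark that only $\lambda > 0$ is needed are both accurate refinements, not departures from the paper's argument.
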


\begin{proof}
Let ${\bf S}$ denote the subset of $k_1$ vertices of ${\bf B}$ obtained by solving the Lagrangian IP. For the purpose of contradiction, assume there exists a different subset ${\bf T}$ of $k_1$ vertices of ${\bf B}$ that covers mode edges than ${\bf S}$. Since the first term of the objective function will be the same for both subsets ($|{\bf S}| = |{\bf T}| = k_1$) and the second term of the objective function will be less for ${\bf T}$ than ${\bf S}$ (since ${\bf S}$ leaves mode edges uncovered, it has a higher penalty term), this will contradict ${\bf S}$ being the optimal solution to the Lagrangian IP.
\end{proof}

Let $A_n = \{a_1, a_2, \ldots\}$ be a sequence defined as $a_k = OPT_B (k) - OPT_B (k - 1)$. Notice that $B$ has the MNC property if and only if $A_n$ is a monotonically nonincreasing sequence, i.e., $a_{k+1} \leq a_{k} \forall k$.

\begin{lemma}
Let $\lambda$ be an arbitrary number in $(0,1)$ such that $\frac{1}{\lambda}$ is not integral. The number of nonzero $x_i$ variables in the optimal solution to the Lagrangian IP is equal to the number of elements of $A_n$ that are bigger than $\frac{1}{\lambda}$.
\end{lemma}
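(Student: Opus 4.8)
The plan is to reduce the Lagrangian IP to a one-dimensional minimization over the \emph{size} of the chosen vertex set, and then exploit the MNC property to pin down the minimizer exactly. First I would observe that once the set $S = \{i : x_i = 1\}$ is fixed, the optimal setting of the $z_e$ variables is forced: we must put $z_e = 1$ on precisely the edges left uncovered by $S$ and $z_e = 0$ elsewhere, since any edge we cover contributes nothing to the penalty while any uncovered edge forces $z_e = 1$. Hence, for a set of size $k$ covering $c$ edges, the objective equals $k + \lambda \cdot (m - c)$. By the preceding lemma, an optimal solution of the Lagrangian IP using $k$ nonzero variables covers the maximum possible number of edges among all $k$-subsets, namely $OPT_B(k)$. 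Therefore the overall minimum of the Lagrangian IP equals $\lambda m + \min_{0 \le k \le n} f(k)$, where I set $f(k) = k - \lambda \cdot OPT_B(k)$, and the number of nonzero $x_i$ variables in an optimal solution is a minimizer $k^\ast$ of $f$.

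Next I would study $f$ through its first differences. A direct computation gives $f(k) - f(k-1) = 1 - \lambda \, a_k$, where $a_k = OPT_B(k) - OPT_B(k-1)$ is exactly the $k$th element of $A_n$. Thus $f$ strictly decreases at step $k$ precisely when $a_k > \frac{1}{\lambda}$, and strictly increases precisely when $a_k < \frac{1}{\lambda}$. The hypothesis that $\frac{1}{\lambda}$ is not integral, together with the integrality of each $a_k$ (a difference of edge counts), rules out the tie $a_k = \frac{1}{\lambda}$, so every step is strictly monotone and no two distinct set sizes can be simultaneously optimal for a degenerate reason.

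Finally, I would invoke the MNC property. By the remark preceding the statement, MNC is equivalent to $A_n$ being monotonically nonincreasing, so there is a threshold index $k^\ast$ with $a_k > \frac{1}{\lambda}$ for all $k \le k^\ast$ and $a_k < \frac{1}{\lambda}$ for all $k > k^\ast$. Consequently $f$ strictly decreases on $\{0, 1, \ldots, k^\ast\}$ and strictly increases thereafter, so $f$ attains its unique minimum at $k^\ast$, which is exactly the number of elements of $A_n$ exceeding $\frac{1}{\lambda}$. The uniqueness also guarantees that \emph{every} optimal solution has precisely $k^\ast$ nonzero variables, so the claim is unambiguous.

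The only genuinely nontrivial step is the first one, namely certifying that an optimal Lagrangian solution of size $k$ necessarily achieves coverage $OPT_B(k)$ rather than some smaller value; but this is handed to us by the previous lemma, whose penalty-term argument forces the optimal $k$-set to be coverage-maximal. Once the problem is recast as minimizing the discrete function $f$, the MNC property does all the remaining work: it makes $f$ effectively convex (a single descending run followed by a single ascending run), which is exactly what turns "the minimizer" into the clean count of marginal gains above $\frac{1}{\lambda}$.
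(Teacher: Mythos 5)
Your proof is correct and takes essentially the same route as the paper: both arguments reduce, via the preceding lemma, to comparing the objective values $k + \lambda \cdot (m - OPT_B(k))$ across set sizes, and both use the MNC-induced monotonicity of $A_n$ together with the integrality of the $a_k$ to make every marginal comparison strict. Your repackaging as minimization of the unimodal function $f(k) = k - \lambda \cdot OPT_B(k)$ is a cleaner rendering of the paper's two-case exchange/contradiction argument, and it has the minor virtue of making explicit where the MNC hypothesis enters (the paper buries this in the phrase ``by construction'') and why the non-integrality of $\frac{1}{\lambda}$ is needed (to rule out ties $a_k = \frac{1}{\lambda}$, which also yields uniqueness of the minimizer).
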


\begin{proof}
Assume the sequence $A_n$ has $k$ elements bigger than $\frac{1}{\lambda}$. For the purpose of contradiction, assume that in the optimal solution ${\bf S}$ to the Lagrangian IP the number of nonzero $x_i$ variables is $\alpha < k$. If we replace this solution with the optimal solution that has $k$ vertices then the first term of the objective function will increase by $k - \alpha$, however, the second term of the objective function will decline by more than $k - \alpha$ since each additional vertex covers at least $\frac{1}{\lambda}$ edges by construction. That will contradict the optimality of solution ${\bf S}$.

Similarly, assume in the optimal solution ${\bf S}$ to the Lagrangian IP the number of nonzero $x_i$ variables is $\alpha > k$. If we replace this solution with the optimal solution that has $k$ vertices then the first term of the objective function will decrease by $k - \alpha$, however, the second term of the objective function will increase by more than $\alpha - k$ since each removed vertex covers less than $\frac{1}{\lambda}$ edges by construction. That will contradict the optimality of solution ${\bf S}$.
\end{proof}

We will solve the Lagrangian IP for $\lambda$ values for which $\frac{1}{\lambda}$ is half-integral. For instance, $\lambda = \frac{1}{4.5}$, or $\lambda = \frac{1}{5.5}$, etc. We will make a binary search over $\lambda$ for values whose reciprocal is half-integral. When we do this binary search there are exactly two possibilities.

\begin{itemize}
\item We will either find a solution that covers exactly $t$ edges. In this case, we will report this solution.
\item We will either find $\lambda_1$ and $\lambda_2$ such that the solution for $\lambda_1$ covers less than $t$ edges, the solution for $\lambda_2$ covers more than $t$ edges end $\frac{1}{\lambda_2} = \frac{1}{\lambda_1} - 1$. Assume the optimal solution for $\lambda_1$ selects $k_1$ vertices and covers $t_1$ edges. Then the optimal solution to the PVCB problem will have $\lceil(k_1 + \frac{(t - t_1)} {(\frac{1}{\lambda_2} - \frac{1}{2})})\rceil$ vertices. This is because $k_1$ vertices can cover at most $t_1$ edges, and each additional vertex increases the number of covered edges by $(\frac{1}{\lambda_2} - \frac{1}{2})$.
\end{itemize}

\begin{lemma}
\label{lem:MNCdoesnothold}
The MNC property does not hold on all bipartite graphs.
\end{lemma}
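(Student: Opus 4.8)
The plan is to derive the lemma directly from the two theorems already proved, rather than to hunt for an explicit graph. The engine is Theorem~\ref{thm:PifMNPholds}, which I would read as a \emph{solvability certificate}: whenever the MNC property holds on a bipartite graph, PVCB is solvable on that graph in polynomial time. So I would begin by assuming, for contradiction, that the MNC property holds on \emph{every} bipartite graph. Under this assumption Theorem~\ref{thm:PifMNPholds} applies to an arbitrary bipartite input, and hence PVCB would be polynomial-time solvable on the whole class of bipartite graphs.

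I would then invoke Theorem~\ref{thm:complexity}, which shows that PVCB is \textbf{NP-hard} on bipartite graphs. Putting the two together, a universally valid MNC property would place an \textbf{NP-hard} problem in \textbf{P}, and therefore force \textbf{P} = \textbf{NP}. As this collapse is not believed to occur, the standing assumption must fail: there exists at least one bipartite graph $B$ and some index $k$ for which $OPT_B(k+2) - OPT_B(k+1) > OPT_B(k+1) - OPT_B(k)$, which is exactly the negation asserted by the lemma.

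The step I expect to be the main obstacle is the conditional character of this argument, since it establishes the lemma only under the hypothesis \textbf{P} $\neq$ \textbf{NP}. To obtain an unconditional statement one would instead exhibit a concrete bipartite graph $B$ and an index $k$ at which the sequence $a_k = OPT_B(k) - OPT_B(k-1)$ strictly increases, then verify the relevant values of $OPT_B(k)$ by hand. This explicit route is delicate: because the edge sets covered by any two vertices overlap in at most the single edge joining them, vertex coverage behaves almost additively, and an optimal cover tends to select highest-degree vertices first with diminishing marginal gain. A successful construction must therefore force the optimal cover to \emph{restructure} between two consecutive cardinalities --- for instance switching from a few high-degree vertices on one side to a larger group of medium-degree vertices --- so that the marginal number of newly covered edges jumps upward. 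Moreover, guided by the positive results of the following sections, any such witness must contain a vertex of degree at least $4$ and a cycle, since trees and bipartite graphs of maximum degree $3$ are shown to satisfy MNC.
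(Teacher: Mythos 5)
Your main argument does not prove the lemma as stated: it establishes only the conditional claim ``if \textbf{P} $\neq$ \textbf{NP}, then the MNC property fails on some bipartite graph.'' The lemma is an unconditional combinatorial assertion, and the paper proves it unconditionally. Deriving it from Theorem~\ref{thm:complexity} together with Theorem~\ref{thm:PifMNPholds} inescapably routes the conclusion through the unproven hypothesis \textbf{P} $\neq$ \textbf{NP}, so what you have is a different (weaker) statement, not the lemma. You recognize this obstacle yourself, but your fallback --- exhibiting a concrete witness graph and verifying its $OPT_B(k)$ values --- is only described in outline; no graph is constructed and no values are checked, so the actual proof is missing. Describing the structural features a witness must have (a vertex of degree at least $4$, a cycle) is sound reverse-engineering from the later positive results, but it is not a substitute for producing the witness.

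For comparison, the paper's proof is a direct counterexample on $16$ vertices: a $K_{3,3}$ with vertex classes $\{v_1,v_2,v_3\}$ and $\{v_4,v_5,v_6\}$, augmented with pendant leaves (two on each of $v_1,v_2,v_3$, three on $v_4$, one on $v_5$). There $OPT_B(1) = 6$ (take $v_4$), $OPT_B(2) = 10$ (e.g., $\{v_4,v_5\}$ or $\{v_1,v_2\}$), and $OPT_B(3) = 15$ (uniquely $\{v_1,v_2,v_3\}$), so the marginal gains are $6, 4, 5$ and the MNC inequality fails at $k = 1$ since $5 > 4$. Notably, this witness realizes exactly the ``restructuring'' phenomenon you predicted --- the optimum switches sides between cardinalities $2$ and $3$ --- and it satisfies your necessary conditions ($v_4$ has degree $6$, and $K_{3,3}$ contains cycles). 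To repair your proposal, you would need to carry out this construction (or one like it) and verify the three $OPT_B$ values by the short case analysis above; the conditional argument should at most be offered as a remark explaining why such a counterexample was to be expected.
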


\begin{proof}Given a bipartite graph ${\bf B}$, recall that $OPT_B (k)$ denotes the maximum number of edges of ${\bf B}$ that can be covered by a subset of $k$ vertices.

\begin{figure}[ht]
\begin{center}
 \begin{tikzpicture}

 \node[circle,fill=black,draw] at (-1,1) (n1) {};
 \node at (-1, 1.35) (l1) {$v_1$};

 \node[circle,fill=black,draw] at (-1,0) (n2) {};
 \node at (-1, 0.35) (l2) {$v_2$};

 \node[circle,fill=black,draw] at (-1,-1) (n3) {};
 \node at (-1, -0.65) (l3) {$v_3$};

 \node[circle,fill=black,draw] at (1,1) (n4) {};
 \node at (1, 1.35) (l4) {$v_4$};

 \node[circle,fill=black,draw] at (1,0) (n5) {};
 \node at (1, 0.35) (l5) {$v_5$};

 \node[circle,fill=black,draw] at (1,-1) (n6) {};
 \node at (1, -0.65) (l6) {$v_6$};

\node[circle,fill=black,draw] at (-2, 1.25) (n7) {};
\node at (-2.4, 1.25) (l7) {$v_7$};

\node[circle,fill=black,draw] at (-2, 0.75) (n8) {};
\node at (-2.4, 0.75) (l8) {$v_8$};

\node[circle,fill=black,draw] at (-2, 0.25) (n9) {};
\node at (-2.4, 0.25) (l9) {$v_9$};

\node[circle,fill=black,draw] at (-2, -0.25) (n10) {};
\node at (-2.4, -0.25) (l10) {$v_{10}$};

\node[circle,fill=black,draw] at (-2, -0.75) (n11) {};
\node at (-2.4, -0.75) (l11) {$v_{11}$};

\node[circle,fill=black,draw] at (-2, -1.25) (n12) {};
\node at (-2.4, -1.25) (l12) {$v_{12}$};

\node[circle,fill=black,draw] at (2, 1.5) (n13) {};
\node at (2.4, 1.5) (l13) {$v_{13}$};

\node[circle,fill=black,draw] at (2, 1) (n14) {};
\node at (2.4, 1) (l14) {$v_{14}$};

\node[circle,fill=black,draw] at (2, 0.5) (n15) {};
\node at (2.4, 0.5) (l15) {$v_{15}$};

\node[circle,fill=black,draw] at (2, 0) (n16) {};
\node at (2.4, 0) (l16) {$v_{16}$};

\draw (n1)--(n4);
\draw (n1)--(n5);
\draw (n1)--(n6);

\draw (n2)--(n4);
\draw (n2)--(n5);
\draw (n2)--(n6);

\draw (n3)--(n4);
\draw (n3)--(n5);
\draw (n3)--(n6);

\draw (n1)--(n7);
\draw (n1)--(n8);

\draw (n2)--(n9);
\draw (n2)--(n10);

\draw (n3)--(n11);
\draw (n3)--(n12);

\draw (n4)--(n13);
\draw (n4)--(n14);
\draw (n4)--(n15);

\draw (n5)--(n16);

 \end{tikzpicture}

\end{center}
\label{fig:MNCCounter}
\caption{An example bipartite graph $B$ for which the MNC property does not hold.}
  \end{figure}
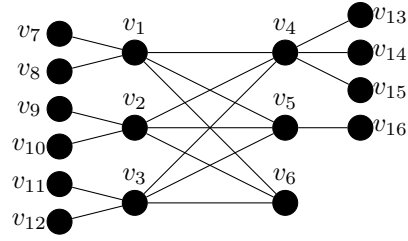

In the bipartite graph {\bf B} given in Figure \ref{fig:MNCCounter}, $OPT_B (1) = 6$. Notice that $\{v_4\}$ covers $6$ edges. $OPT_B(2) = 10$. There are several subsets of two vertices that cover $10$ edges. For instance, $\{v_4,v_5\}$, or $\{v_1, v_2\}$. $OPT_B (3) = 15$. There is only one subset of three vertices that covers $15$ edges and this subset is $\{v_1, v_2, v_3\}$.

Since $OPT_B (3) - OPT_B (2) \not\leq OPT_B (2) - OPT_B (2)$ for $B$, the MNC property does not hold on all bipartite graphs as stated by Lemma \ref{lem:MNCdoesnothold}.

\end{proof}

\subsection{MNC Property on Trees}

In this section, we prove that the MNC property holds on trees. This result is stated below as Lemma \ref{lem:MNCUT}. Notice that Lemma \ref{lem:MNCUT} and Theorem \ref{thm:PifMNPholds} imply that the PVCT problem is in {\bf P}. This result is presented below as Theorem \ref{thm:PVCTisinP}. In this section, we use the notation $OPT_T(k)$ to denote the maximum number of edges of a given tree ${\bf T}$ that can be covered by $k$ vertices.

\begin{lemma}
\label{lem:MNCUT}
MNC property holds on trees.
\end{lemma}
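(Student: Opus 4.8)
The plan is to recast the MNC property as the concavity of the coverage function $OPT_T$: since MNC asks that $OPT_T(k+2)-OPT_T(k+1)\le OPT_T(k+1)-OPT_T(k)$, it is equivalent to the midpoint inequality $2\,OPT_T(k+1)\ge OPT_T(k)+OPT_T(k+2)$ for every $k\ge 0$. I would prove this by an exchange argument between two optimal solutions. Fix an optimal set $A$ with $|A|=k$ and an optimal set $C$ with $|C|=k+2$, and write $f(S)$ for the number of edges of ${\bf T}$ covered by $S$, so that $f(A)=OPT_T(k)$ and $f(C)=OPT_T(k+2)$. The goal is to redistribute $A$ and $C$ into two sets of size $k+1$ whose total coverage is at least $f(A)+f(C)$; since each such set covers at most $OPT_T(k+1)$ edges, this yields the midpoint inequality.

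The engine of the argument is the elementary observation that, for $v\notin S$, the marginal coverage of $v$ is $f(S\cup\{v\})-f(S)=\deg(v)-|N(v)\cap S|$, i.e.\ it counts the neighbours of $v$ that lie outside $S$. Choose any $v\in C\setminus A$ (which is nonempty because $|C|>|A|$) and move it across, setting $A'=A\cup\{v\}$ and $C'=C\setminus\{v\}$, both of size $k+1$. A direct computation with the marginal formula gives
\[
 \bigl(f(A')+f(C')\bigr)-\bigl(f(A)+f(C)\bigr)=|N(v)\cap C|-|N(v)\cap A|.
\]
Hence the redistribution succeeds as soon as we can find a vertex $v\in C\setminus A$ that has at least as many neighbours in $C$ as in $A$. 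Establishing the existence of such a $v$ is the entire content of the lemma, and it is where the tree hypothesis must enter: on a general (dense) graph no such vertex need exist, which is exactly the phenomenon behind the counterexample of Lemma \ref{lem:MNCdoesnothold}.

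To locate $v$, I would restrict attention to the forest $F$ that ${\bf T}$ induces on the symmetric difference, writing $P=C\setminus A$ and $Q=A\setminus C$ (so $|P|=|Q|+2$), and look for a leaf or extremal vertex of $F$ inside $P$: such a vertex has at most one neighbour inside $P\cup Q$, which is the kind of local sparsity that makes the inequality $|N(v)\cap C|\ge|N(v)\cap A|$ plausible. The main obstacle is that acyclicity alone is not enough --- a pure edge count in $F$ only settles the nested case $A\subseteq C$ and leaves open the configurations in which the extremal $P$-vertex happens to be adjacent to $Q$. I therefore expect the decisive step to be the combination of this forest/leaf structure with the local optimality of $A$ and $C$ (the swap inequalities $f(A)\ge f\bigl((A\setminus\{a\})\cup\{b\}\bigr)$ and $f(C)\ge f\bigl((C\setminus\{c\})\cup\{d\}\bigr)$), and I would scrutinise this step most carefully, since the reduction above shows that the lemma stands or falls precisely on whether a good exchange vertex can always be found.
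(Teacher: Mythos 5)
Your reduction is sound as far as it goes: the midpoint reformulation of MNC, the marginal-coverage identity $f(S\cup\{v\})-f(S)=\deg(v)-|N(v)\cap S|$, and the computation showing that moving $v\in C\setminus A$ across changes the combined coverage by $|N(v)\cap C|-|N(v)\cap A|$ are all correct. But the write-up stops exactly where the lemma begins: you never establish that a vertex $v\in C\setminus A$ with $|N(v)\cap C|\ge|N(v)\cap A|$ exists, and you say yourself that the argument ``stands or falls'' on this point. Writing $P=C\setminus A$ and $Q=A\setminus C$, the condition reads $|N(v)\cap P|\ge|N(v)\cap Q|$, and the forest/leaf count you sketch cannot deliver it: if $Q$ consists of a single vertex adjacent to three vertices of $P$ (a star inside the tree), every $v\in P$ violates the inequality even though $|P|=|Q|+2$ and the induced subgraph is acyclic. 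So the optimality of $A$ and $C$ must carry essentially all of the weight, and you give no argument for how it does; it is not even clear that a single-vertex exchange suffices for an \emph{arbitrary} pair of optimal solutions, since the midpoint inequality itself does not force the improving swap to move only one vertex.

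The paper's proof takes a structurally different route that sidesteps precisely this obstruction. Assuming MNC fails at a minimal $k$, it first proves two quantitative statements by an averaging (submodularity) step: any $(k-1)$-subset of an optimal $(k+1)$-set covers at most $OPT_T(k-1)-3$ edges (Lemma \ref{lem:str}), and any $(k+1)$-superset of an optimal $(k-1)$-set covers at most $OPT_T(k+1)-2$ edges (Lemma \ref{lem:str2}). It then compares an optimal $(k+1)$-set ${\bf A}$ with an optimal $(k-1)$-set ${\bf X}$ by moving \emph{pairs} $\{i,j\}\subseteq {\bf A}-{\bf X}$, concluding from the two lemmas that every such pair sends at least $5$ edges into ${\bf X}-{\bf A}$, so one vertex of each pair is adjacent to at least $3$ vertices of ${\bf X}-{\bf A}$ and hence, in a rooted tree, parents at least $2$ of them; counting parents against $|{\bf A}-{\bf X}|=|{\bf X}-{\bf A}|+2$ yields the contradiction. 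The two-vertex exchange combined with the tree's one-parent property is exactly the mechanism your one-vertex swap lacks in the star-like configurations above. The gap in your proposal is therefore not a routine detail but the entire content of the lemma: to salvage your plan you would need to prove the exchange-vertex existence from local optimality, which is at least as delicate as the paper's own argument.
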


\begin{proof}
For the purpose of contradiction, assume that the MNC property does not hold for some tree {\bf T}, i.e., we have $OPT_T(k + 1) - OPT_T(k) > OPT_T(k) - OPT_T(k - 1)$ for some integer $k$. Without loss of generality, let $k$ be the smallest such integer.

\begin{lemma}
\label{lem:str}
Let {\bf A} be an optimal solution to the PVCT problem with $k + 1$ vertices. Every subset of {\bf A} with $k - 1$ vertices cover at most $OPT_T(k - 1) - 3$ edges.
\end{lemma}

\begin{proof}
Let {\bf B} be a subset of {\bf A} with $k - 1$ vertices and let it be a subset that covers the maximum number of edges among all such subsets. Let $|{\bf B}|$ and $|{\bf A}|$ denote the number of edges covered by ${\bf B}$ and ${\bf A}$ respectively. Notice that $|{\bf A}| = OPT_T(k + 1)$ by definition. The lemma states that $|{\bf B}| \leq OPT_T(k - 1) - 3$.

Let $\{i,j\} = {\bf A} - {\bf B}$. And let $\alpha = OPT_T(k) - OPT_T(k - 1)$, and let $\alpha + l = OPT_T(k + 1) - OPT_T(k)$. Notice that $l$ is a positive integer since the MNC property does not hold by assumption.

Notice that $\max\{|B \cup \{i\}|, |B \cup \{j\}|\} \geq \lceil \frac{|{\bf B}| + |{\bf A}|}{2}\rceil$. If $|{\bf B}| \geq OPT_T(k - 1) - 2$ then we will have $\max\{|B \cup \{i\}|, |B \cup \{j\}|\} > OPT_T(k - 1) + \alpha = OPT_T(k)$, which will be a contradiction. Therefore, the lemma holds.
\end{proof}

\begin{lemma}
\label{lem:str2}
Let {\bf X} be an optimal solution to PVCT problem with $k - 1$ vertices. Every superset of ${\bf X}$ with $k + 1$ vertices cover at most $OPT_T(k + 1) - 2$ edges.
\end{lemma}

\begin{proof}
Let ${\bf Y}$ be a superset of ${\bf X}$ with $k + 1$ vertices and let it be a superset that covers the maximum number of edges among all such supersets. Notice that $|{\bf X}| = OPT_T(k - 1)$ by definition. The lemma states that $|{\bf Y}| \leq OPT_T(k + 1)| - 2$.

Let $\{i,j\} = {\bf Y} - {\bf X}$. And let $\alpha = OPT_T(k) - OPT_T(k - 1)$, and let $\alpha + l = OPT_T(k + 1) - OPT_T(k)$. Notice that $l$ is a positive integer since the MNC property does not hold by assumption.

Notice that $\max\{|X \cup \{i\}|, |X \cup \{j\}|\} \geq \lceil \frac{|{\bf X}| + |{\bf Y}|}{2}\rceil$. If $|{\bf Y}| \geq OPT_T(k + 1) - 1$ then we will have $\max\{|X \cup \{i\}|, |X \cup \{j\}|\} > OPT_T(k - 1) + \alpha = OPT_T(k)$, which will be a contradiction. Therefore, the lemma holds.
\end{proof}

Let {\bf A} be an optimal solution to the PVCT problem with $k + 1$ vertices, and let {\bf X} be an optimal solution to the PVCT problem with $k - 1$ vertices. Lemma \ref{lem:str} implies that ${\bf X} \not\subset {\bf A}$, i.e., we have both ${\bf A} - {\bf X}$ and ${\bf X} - {\bf A}$ nonempty. Since ${\bf A}$ is composed of $k + 1$ vertices and ${\bf X}$ is composed of $k - 1$ vertices, ${\bf A} - {\bf X}$ has $2$ more elements than ${\bf X} - {\bf A}$. Let $i$ and $j$ be $2$ arbitrary elements of ${\bf A} - {\bf X}$. Notice that $|{\bf A}| - |{\bf A} - \{i,j\}| \geq |{\bf X} \cup \{i, j\}| - |{\bf X}| + 5$ due to Lemma \ref{lem:str} and Lemma \ref{lem:str2}. In other words, if we append the vertices $i$ and $j$ to the set ${\bf A} - \{i,j\}$, the increase in the number of edges covered will be at least $5$ more than the increase in the number of edges covered if $i$ and $j$ are appended to ${\bf X}$. Therefore, there are at least $5$ edges between the sets $\{i,j\}$ and ${\bf X} - {\bf A}$. Therefore, either $i$ or $j$ is incident to at least $3$ vertices of ${\bf X} - {\bf A}$. Without loss of generality, let $i$ be that vertex. Since a vertex of a tree can be the child of exactly one vertex, $i$ will be parenting at least $2$ vertices of ${\bf X} - {\bf A}$. Since this is true for every pair $\{i,j\}$ of vertices of ${\bf A} - {\bf X}$, at least all but one vertex of ${\bf A} - {\bf X}$ is parenting some vertex of ${\bf X} - {\bf A}$. This is a contradiction since ${\bf A} - {\bf X}$ has $2$ more vertices than ${\bf X} - {\bf A}$. Therefore, the lemma holds.
\end{proof}

\begin{theorem}
\label{thm:PVCTisinP}
The PVCT problem is in {\bf P}.
\end{theorem}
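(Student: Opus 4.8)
The plan is to obtain Theorem \ref{thm:PVCTisinP} as an immediate corollary of the two results already established, namely Lemma \ref{lem:MNCUT} and Theorem \ref{thm:PifMNPholds}; essentially no new machinery is required, since all of the genuine technical content has already been discharged. The overall strategy is simply to exhibit an arbitrary tree as a bipartite graph satisfying the MNC property, and then feed it into the polynomial-time algorithm guaranteed by Theorem \ref{thm:PifMNPholds}.

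Concretely, I would proceed in three short steps. First I would record the standard structural fact that every tree is bipartite: a tree is acyclic and therefore contains no odd cycle, and a graph is bipartite precisely when it has no odd cycle. Consequently every instance of the PVCT problem is at the same time an instance of the PVCB problem, and the two problems coincide on the class of trees. Second, I would invoke Lemma \ref{lem:MNCUT}, which asserts that the MNC property holds on every tree ${\bf T}$; this places ${\bf T}$ inside exactly the class of bipartite graphs to which Theorem \ref{thm:PifMNPholds} applies. Third, applying Theorem \ref{thm:PifMNPholds} to ${\bf T}$, regarded as a bipartite graph enjoying the MNC property, yields a polynomial-time algorithm that solves the PVCB problem on ${\bf T}$, which by the first step is nothing other than the PVCT problem. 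Hence PVCT is in {\bf P}.

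The main obstacle is, honestly, negligible for this statement: the difficulty has been localized entirely in Lemma \ref{lem:MNCUT} (the combinatorial argument about optimal covers of size $k-1$, $k$, and $k+1$) and in Theorem \ref{thm:PifMNPholds} (the analysis of the Lagrangian relaxation together with the binary search over those $\lambda$ whose reciprocal is half-integral). The only point requiring any care is to confirm that trees genuinely fall within the hypotheses of Theorem \ref{thm:PifMNPholds} — that is, that they are bipartite, so that the total unimodularity of the Lagrangian constraint matrix and the entire algorithmic pipeline apply to them verbatim. Once bipartiteness is noted, the chaining of the two prior results is mechanical and the theorem follows.
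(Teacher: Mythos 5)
Your proposal is correct and matches the paper's proof exactly: the paper likewise derives Theorem \ref{thm:PVCTisinP} directly from Lemma \ref{lem:MNCUT} together with Theorem \ref{thm:PifMNPholds}. Your extra remark that trees are bipartite (having no odd cycles) is a routine check the paper leaves implicit, and it is correct.
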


\begin{proof}Theorem directly follows from Lemma \ref{lem:MNCUT} and Theorem \ref{thm:PifMNPholds}.\end{proof}

\subsection{MNC Property on Degree Bounded Bipartite Graphs}

In this section, we show that the MNC property holds on bipartite graphs, if the degree of each vertex is at most $3$. Therefore, the PVC problem is polynomial-time solvable on the set of bipartite graphs, where the degree of each vertex is at most $3$. This result is stated below as Theorem \ref{thm:MNCholdsdegree3}.

\begin{theorem}
\label{thm:MNCholdsdegree3}
The PVC problem is in {\bf P} on the set of bipartite graphs, where the degree of each vertex is at most $3$.
\end{theorem}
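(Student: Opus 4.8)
The plan is to reduce the statement to the Marginally Nonincreasing Coverage property and then reuse the machinery already developed for trees. By Theorem \ref{thm:PifMNPholds}, it suffices to prove that the MNC property holds on every bipartite graph ${\bf B}$ in which every vertex has degree at most $3$; polynomial-time solvability of the PVCB problem on such graphs is then immediate. So I would devote the whole proof to verifying the MNC inequality, following the skeleton of the proof of Lemma \ref{lem:MNCUT}.

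First I would assume, for contradiction, that the MNC property fails on some degree-bounded bipartite graph ${\bf B}$, and pick the smallest $k$ with $OPT_B(k+1) - OPT_B(k) > OPT_B(k) - OPT_B(k-1)$. Fix an optimal cover ${\bf A}$ on $k+1$ vertices and an optimal cover ${\bf X}$ on $k-1$ vertices. The two auxiliary lemmas proved for trees, namely Lemma \ref{lem:str} and Lemma \ref{lem:str2}, carry over verbatim: their proofs invoke only the averaging inequality $\max\{|{\bf B}\cup\{i\}|, |{\bf B}\cup\{j\}|\} \geq \lceil \frac{|{\bf B}|+|{\bf A}|}{2}\rceil$ (where $|\cdot|$ counts covered edges), which holds for arbitrary graphs and makes no use of acyclicity. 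Consequently, exactly as in the tree argument, ${\bf X}\not\subset{\bf A}$, both ${\bf A}-{\bf X}$ and ${\bf X}-{\bf A}$ are nonempty with $|{\bf A}-{\bf X}| = |{\bf X}-{\bf A}|+2$, and for every pair $\{i,j\}\subseteq {\bf A}-{\bf X}$ there are at least $5$ edges joining $\{i,j\}$ to ${\bf X}-{\bf A}$.

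The single tree-specific step---turning ``$5$ edges into ${\bf X}-{\bf A}$'' into a contradiction through the unique-parent function---is where I would substitute the degree bound. Since $i$ and $j$ each have degree at most $3$ and together send at least $5$ edges into ${\bf X}-{\bf A}$, at least one of them, say $i$, sends at least $3$ of its edges there; but then, its degree being at most $3$, \emph{all} of $i$'s edges land in ${\bf X}-{\bf A}$. Applying this to every pair shows that at most one vertex of ${\bf A}-{\bf X}$ can fail to have all of its edges inside ${\bf X}-{\bf A}$, since two failing vertices would form a pair contradicting the previous sentence. Hence at least $|{\bf A}-{\bf X}|-1 = |{\bf X}-{\bf A}|+1$ vertices of ${\bf A}-{\bf X}$ each contribute $3$ edges to ${\bf X}-{\bf A}$, so the number of edges between ${\bf A}-{\bf X}$ and ${\bf X}-{\bf A}$ is at least $3(|{\bf X}-{\bf A}|+1)$. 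On the other hand, every vertex of ${\bf X}-{\bf A}$ has degree at most $3$, so at most $3\,|{\bf X}-{\bf A}|$ edges can be incident to ${\bf X}-{\bf A}$. Since $3(|{\bf X}-{\bf A}|+1) > 3\,|{\bf X}-{\bf A}|$, this is the desired contradiction.

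The main obstacle is not the final double count, which is clean once the degree bound is in place; rather it is the preliminary verification that Lemma \ref{lem:str}, Lemma \ref{lem:str2}, and the derived ``$5$-edge'' estimate are genuinely structure-free, so that the only property of trees actually exploited in Lemma \ref{lem:MNCUT} is that a vertex has a unique parent. Isolating that single use and replacing it by the degree-$3$ bound is the crux; everything else is inherited from the tree proof.
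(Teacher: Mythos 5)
Your proposal is correct and follows essentially the same route as the paper's own proof: both observe that Lemma \ref{lem:str} and Lemma \ref{lem:str2}, and hence the five-edge bound for every pair $\{i,j\} \subseteq {\bf A}-{\bf X}$, never use acyclicity, and both then replace the unique-parent step of the tree argument with a degree-$3$ double count of the edges between ${\bf A}-{\bf X}$ and ${\bf X}-{\bf A}$. The only cosmetic difference is that your count gives the slightly weaker lower bound $3\,(|{\bf X}-{\bf A}|+1)$ where the paper gets $3\,|{\bf X}-{\bf A}|+5$ (it also credits at least $2$ edges to the one possibly exceptional vertex); either bound exceeds the $3\,|{\bf X}-{\bf A}|$ edges the degree constraint allows on the ${\bf X}-{\bf A}$ side, so the contradiction goes through identically.
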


\begin{proof}Notice that Lemma \ref{lem:str} and Lemma \ref{lem:str2} holds not only for trees but all bipartite graphs, since in the proofs of Lemma \ref{lem:str} and Lemma \ref{lem:str2} we have not used the fact that the underlying bipartite graph is a tree.

Let {\bf A} be an optimal solution to the PVCB problem with $k + 1$ vertices, and let {\bf X} be an optimal solution to the PVCB problem with $k - 1$ vertices. Lemma \ref{lem:str} implies that ${\bf X} \not\subset {\bf A}$, i.e., we have both ${\bf A} - {\bf X}$ and ${\bf X} - {\bf A}$ nonempty. Since ${\bf A}$ is composed of $k + 1$ vertices and ${\bf X}$ is composed of $k - 1$ vertices, ${\bf A} - {\bf X}$ has $2$ more elements than ${\bf X} - {\bf A}$.

Let $i$ and $j$ be $2$ arbitrary elements of ${\bf A} - {\bf X}$. Notice that $|{\bf A}| - |{\bf A} - \{i,j\}| \geq |{\bf X} \cup \{i, j\}| - |{\bf X}| + 5$ due to Lemma \ref{lem:str} and Lemma \ref{lem:str2}. In other words, if we append the vertices $i$ and $j$ to the set ${\bf A} - \{i,j\}$, the increase in the number of edges covered will be at least $5$ more than the increase in the number of edges covered if $i$ and $j$ are appended to ${\bf X}$. Therefore, there are at least $5$ edges between the sets $\{i,j\}$ and ${\bf X} - {\bf A}$. Therefore, either $i$ or $j$ is incident to at least $3$ vertices of ${\bf X} - {\bf A}$. Since any two elements of ${\bf A} - {\bf X}$ is incident to $5$ elements of ${\bf X} - {\bf A}$, the number of edges with one endpoint in ${\bf A} - {\bf X}$, one endpoint in ${\bf X} - {\bf A}$ is at least $3 \cdot \left(|{\bf A} - {\bf X}| - 1\right) + 2 = 3 \cdot |{\bf A} - {\bf X}| - 1$.  Since $|{\bf A} - {\bf X}| = |{\bf X} - {\bf A}| + 2$, we have the number of edges with one endpoint in ${\bf A} - {\bf X}$, one endpoint in ${\bf X} - {\bf A}$ is at least $3 \cdot |{\bf X} - {\bf A}| + 5$. Therefore, the vertices in ${\bf X} - {\bf A}$ are incident to at least $3 \cdot |{\bf X} - {\bf A}| + 5$ edges. This contradicts with the fact that the degree of every vertex of ${\bf B}$ is at most $3$.

\end{proof}

\subsection{MNC Property for Vertex Weighted Trees}

We proved Theorem \ref{thm:PifMNPholds}, for the PVC problem on bipartite graphs where neither the vertices nor the edges are weighted. By proper definition of the Marginally Nonincreasing Coverage property on vertex weighted graphs, Theorem \ref{thm:PifMNPholds} can be extended for the vertex weighted graph. For a bipartite graph ${\bf B}$, whose vertices are weighted, the analogues definition of the MNC property is as follows: Let $OPT_B(k)$ denote the maximum number of edges that can be covered by any subset ${\bf S}$ of vertices of {\bf B} such that the sum of the weights of the vertices of ${\bf S}$ is at most $k$. The the MNC property holds on ${\bf B}$ if $OPT_B(k+2) - OPT_B(k+1) \leq OPT_B(k+1) - OPT_B(k)$ for all integers $k$.

Though Theorem \ref{thm:PifMNPholds} can be generalized for the bipartite graphs with weighted vertices, this does not imply polynomial-time solvability for vertex weighted trees. This is because the MNC property does not hold for vertex weighted trees. Figure \ref{fig:counterexample} shows a counterexample where the weight of each vertex is in $\{1,2\}$.

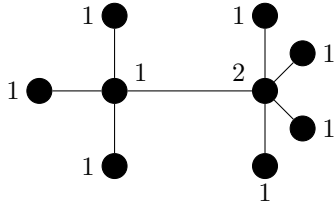
\begin{figure}[htb]
\begin{center}
\begin{tikzpicture}

\node [fill=black, circle, draw] at (-1, 1) (n1) {};
\node at (-1.35, 1) (l1) {$1$};

\node [fill=black, circle, draw] at (-2, 0) (n2) {};
\node at (-2.35, 0) (l2) {$1$};

\node [fill=black, circle, draw] at (-1, -1) (n3) {};
\node at (-1.35, -1) (l3) {$1$};

\node [fill=black, circle, draw] at (-1, 0) (n4) {};
\node at (-0.65, 0.25) (l4) {$1$};

\node [fill=black, circle, draw] at (1, 0) (n5) {};
\node at (0.65, 0.25) (l5) {$2$};

\node [fill=black, circle, draw] at (1, 1) (n6) {};
\node at (0.65, 1) (l6) {$1$};

\node [fill=black, circle, draw] at (1.5, 0.5) (n7) {};
\node at (1.85, 0.5) (l7) {$1$};

\node [fill=black, circle, draw] at (1.5, -0.5) (n8) {};
\node at (1.85, -0.5) (l8) {$1$};

\node [fill=black, circle, draw] at (1, -1) (n9) {};
\node at (1, -1.35) (l9) {$1$};

\draw (n1)--(n4);
\draw (n2)--(n4);
\draw (n3)--(n4);
\draw (n4)--(n5);
\draw (n6)--(n5);
\draw (n7)--(n5);
\draw (n8)--(n5);
\draw (n9)--(n5);
\end{tikzpicture}
\end{center}
\label{fig:counterexample}
\caption{Counterexample showing that the MNC property does not hold for the Partial Vertex Cover problem on vertex-weighted trees, even if the vertex weights are restricted to be in $\{1,2\}$.}
\end{figure}

Consider the example graph given in Figure \ref{fig:counterexample}. In this example, we have a tree with $9$ vertices. The weights of the vertices are written right next to them. So, the weight of $8$ of the vertices is $1$, and we have only $1$ vertex with a weight of $2$.

Let $OPT_T(k)$ denote the maximum number of edges that can be covered by a subset ${\bf S}$ of the vertices of the tree such that $\sum_{i \in {\bf S}} w_i \leq k$. In the example graph, it is easy to check $OPT_T(1) = 4$, $OPT_T(2) = 5$, and $OPT_T(3) = 8$.

Since $OPT_T(3) - OPT_T(2) \not \leq OPT_T(2) - OPT_T(1)$, MNC property does not hold for the Partial Vertex Cover problem on vertex-weighted trees, even if the vertex weights are in $\{1,2\}$.

This observation is stated below as Lemma \ref{lem:MNCWT}.

\begin{lemma}
\label{lem:MNCWT}
MNC property does not hold for the Partial Vertex Cover problem on trees when the edges are unweighted, and the vertex weights are in $\{1,2\}$.
\end{lemma}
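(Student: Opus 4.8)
The plan is to establish the lemma by exhibiting an explicit counterexample, namely the vertex-weighted tree $T$ of Figure \ref{fig:counterexample}, and then directly verifying that the defining MNC inequality is violated for $k = 1$. Since the lemma is a negative result, there is essentially no conceptual machinery to develop; the whole argument reduces to correctly evaluating $OPT_T(1)$, $OPT_T(2)$, and $OPT_T(3)$ for this tree and checking a single numerical inequality.

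First I would fix notation and describe the tree structure: it has a weight-$1$ center vertex $n_4$ adjacent to three weight-$1$ leaves and to a weight-$2$ vertex $n_5$, with $n_5$ in turn adjacent to four additional weight-$1$ leaves, for a total of $8$ edges. I would then compute each $OPT$ value, taking care to justify optimality in each case rather than merely exhibiting a good subset. For $OPT_T(1)$ the budget forbids selecting the weight-$2$ vertex $n_5$, and among the weight-$1$ vertices the center $n_4$ maximizes coverage at $4$ edges, so $OPT_T(1) = 4$. For $OPT_T(2)$ one checks that $n_5$ alone covers its five incident edges and that $n_4$ together with any single leaf of $n_5$ also covers exactly five; no budget-$2$ subset can cover all $8$ edges, since covering every edge requires both $n_4$ and $n_5$, whose combined weight is $3$. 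Hence $OPT_T(2) = 5$. For $OPT_T(3)$ the subset $\{n_4, n_5\}$ has weight exactly $3$ and covers every edge of $T$, giving $OPT_T(3) = 8$.

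Finally I would substitute these values into the MNC inequality at $k = 1$: we have $OPT_T(3) - OPT_T(2) = 3$ while $OPT_T(2) - OPT_T(1) = 1$, so $3 \not\le 1$ and the property fails. The only genuine obstacle is the optimality verification, especially for $OPT_T(2)$, where one must rule out covering more than five edges under a weight budget of $2$; this follows from the structural fact that all eight edges are covered only by a vertex set containing both $n_4$ and $n_5$, whose combined weight $3$ exceeds the budget. It is worth noting \emph{why} weights break the property even though the unweighted tree case of Lemma \ref{lem:MNCUT} does not: the heavy, high-degree vertex $n_5$ cannot be added incrementally, so the coverage it unlocks arrives as one large jump exactly when the budget first admits the pair $\{n_4, n_5\}$, which is precisely the kind of increasing marginal return that the MNC property forbids.
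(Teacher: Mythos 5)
Your proposal is correct and takes essentially the same route as the paper: the paper's proof of this lemma is exactly the counterexample of Figure~\ref{fig:counterexample}, with the values $OPT_T(1)=4$, $OPT_T(2)=5$, $OPT_T(3)=8$ and the observation that $OPT_T(3)-OPT_T(2)=3\not\leq 1=OPT_T(2)-OPT_T(1)$. Where the paper merely asserts these values are ``easy to check,'' you attempt to justify optimality, which is a strict improvement in rigor --- but your justification for the key bound $OPT_T(2)\leq 5$ has a small logical slip: the fact that no budget-$2$ set covers all eight edges (since full coverage needs both $n_4$ and $n_5$, of combined weight $3$) excludes coverage $8$ but does not by itself exclude coverage $6$ or $7$. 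The correct one-line closure is structural: any set of weight at most $2$ is either $\{n_5\}$, which covers its $5$ incident edges, or consists of at most two weight-$1$ vertices; since $n_4$ is the only weight-$1$ vertex of degree exceeding $1$ (it has degree $4$, all others are leaves of degree $1$), any two weight-$1$ vertices cover at most $4+1=5$ edges. With that substitution your verification of $OPT_T(2)=5$ is complete, and the rest of your argument (including the accurate intuition that the heavy high-degree vertex $n_5$ creates the increasing marginal return that Lemma~\ref{lem:MNCUT} rules out in the unweighted case) goes through exactly as in the paper.
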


\section{Conclusion}
\label{sec:conc}
In this paper, we studied the Partial Vertex Cover problem on bipartite graphs and trees. We proved that the PVC problem is {\bf NP-hard} on bipartite graphs by giving a reduction from the CLIQUE problem. We then proved that the problem is polynomial-time solvable for the set of bipartite graphs for which the Marginally Nonincreasing Coverage property holds. We also proved that the PVC problem is in {\bf P} on trees by showing that the MNC property holds on trees. We also proved the MNC property holds on the set of bipartite graphs, where the degree of each vertex is at most $3$. Therefore, proved that the PVCB problem is in {\bf P} on that set of bipartite graphs. We also analyzed the vertex-weighted trees and showed that the MNC property does not hold even on vertex-weighted trees, even if the vertex weights are in $\{1,2\}$.

From our perspective, the following lines of research appear promising:

\begin{itemize}
\item Determining whether the PVCB problem is ${\bf AXP-hard}$ or not.
\item Obtaining an $\alpha$-approximation algorithm where $\alpha \leq \frac{4}{3}$.
\end{itemize}

\bibliographystyle{plain}
\bibliography{bugrarefs,orefs,mrefs,myrefs,mytrefs,negcyclerefs,proofrefs,sprefs,graphrefs,crefs,satrefs,symbrefs,constraintrefs,modcheckrefs,diffsymbrefs,certrefs,classrefs}

\begin{thebibliography}{10}

\bibitem{Bar01}
Reuven Bar-Yehuda.
\newblock Using homogeneous weights for approximating the partial cover
  problem.
\newblock {\em J. Algorithms}, 39(2):137--144, 2001.

\bibitem{BFMR10}
Reuven Bar-Yehuda, Guy Flysher, Juli{\'a}n Mestre, and Dror Rawitz.
\newblock Approximation of partial capacitated vertex cover.
\newblock In Lars Arge, Michael Hoffmann, and Emo Welzl, editors, {\em 15th
  Annual European Symposium on Algorithms}, volume 4698 of {\em Lecture Notes
  in Computer Science}, pages 335--346, Eilat, Israel, 2007. Springer.

\bibitem{Bla03}
Markus Bl{\"a}ser.
\newblock Computing small partial coverings.
\newblock {\em Inf. Process. Lett.}, 85(6):327--331, 2003.

\bibitem{BshB98}
Nader~H. Bshouty and Lynn Burroughs.
\newblock Massaging a linear programming solution to give a 2-approximation for
  a generalization of the vertex cover problem.
\newblock In {\em STACS}, pages 298--308, 1998.

\bibitem{CGBS13}
Bugra Caskurlu, Ashish Gehani, Cemal~Cagatay Bilgin, and K.~Subramani.
\newblock Analytical models for risk-based intrusion response.
\newblock {\em Computer Networks(Special issue on Security/Identity
  Architecture)}, Accepted, In Press.

\bibitem{DS05}
Irit Dinur and Samuel Safra.
\newblock On the hardness of approximating minimum vertex cover.
\newblock {\em Ann. of Math. (2)}, 162(1):439--485, 2005.

\bibitem{GKS04}
Rajiv Gandhi, Samir Khuller, and Aravind Srinivasan.
\newblock Approximation algorithms for partial covering problems.
\newblock {\em J. Algorithms}, 53(1):55--84, 2004.

\bibitem{Guo05}
Jiong Guo, Rolf Niedermeier, and Sebastian Wernicke.
\newblock Parameterized complexity of generalized vertex cover problems.
\newblock In {\em In Proc. 9th WADS, volume 3608 of LNCS}, pages 36--48.
  Springer, 2005.

\bibitem{Hoch98}
Dorit~S. Hochbaum.
\newblock The {\it t}-vertex cover problem: Extending the half integrality
  framework with budget constraints.
\newblock In {\em APPROX}, pages 111--122, 1998.

\bibitem{Kar09}
George Karakostas.
\newblock A better approximation ratio for the vertex cover problem.
\newblock {\em ACM Transactions on Algorithms}, 5(4), 2009.

\bibitem{Kar72}
R.~Karp.
\newblock Reducibility among combinatorial problems.
\newblock In R.~Miller and J.~Thatcher, editors, {\em Complexity of Computer
  Computations}, pages 85--103. Plenum Press, 1972.

\bibitem{KR08}
Subhash Khot and Oded Regev.
\newblock Vertex cover might be hard to approximate to within $2-\epsilon$.
\newblock {\em J. Comput. Syst. Sci.}, 74:335--349, May 2008.

\bibitem{KLR08}
Joachim Kneis, Alexander Langer, and Peter Rossmanith.
\newblock Improved upper bounds for partial vertex cover.
\newblock In {\em WG}, pages 240--251, 2008.

\bibitem{KMRR06}
Joachim Kneis, Daniel M{\"o}lle, Stefan Richter, and Peter Rossmanith.
\newblock Intuitive algorithms and t-vertex cover.
\newblock In {\em ISAAC}, pages 598--607, 2006.

\bibitem{KMR07}
Joachim Kneis, Daniel M{\"o}lle, and Peter Rossmanith.
\newblock Partial vs. complete domination: t-dominating set.
\newblock In {\em SOFSEM (1)}, pages 367--376, 2007.

\bibitem{KPS11}
Jochen K{\"o}nemann, Ojas Parekh, and Danny Segev.
\newblock A unified approach to approximating partial covering problems.
\newblock {\em Algorithmica}, 59(4):489--509, 2011.

\bibitem{M09}
Juli�n Mestre.
\newblock A primal-dual approximation algorithm for partial vertex cover:
  Making educated guesses.
\newblock {\em Algorithmica}, 55(1):227--239, 2009.

\bibitem{PY91}
Christos~H. Papadimitriou and Mihalis Yannakakis.
\newblock Optimization, approximation, and complexity classes.
\newblock {\em J. Comput. System Sci.}, 43(3):425--440, 1991.

\end{thebibliography}

\end{document}